\documentclass{llncs}
\usepackage{makeidx}  
\usepackage{amsmath,amssymb, color, amscd}
%


\newcommand{\R}{\mathbb{R}}

\newcommand{\Z}{\mathbb{Z}}



\begin{document}
\title{Symmetries of Quasi-Values}
\author{Ales A. Kubena\inst{1} \and Peter Franek\inst{2}}
\institute{Institute of Information Theory and Automation of the ASCR, \\ Pod Vodarenskou vezi 4, 182 08, Prague, Czech Republic\\
\email{kubena@utia.cas.cz},
\and
Institute of Information Technologies,
Czech Technical University,
Thakurova 9,
Prague 160\,00,
Czech Republic \\
\email{peter.franek@fit.cvut.cz}}

\maketitle              

\begin{abstract}
According to Shapley's game-theoretical result, there exists a unique game value of finite cooperative games that satisfies axioms on additivity, efficiency, null-player property and symmetry.
The original setting requires symmetry with respect to arbitrary permutations of players. We analyze the consequences of weakening the symmetry axioms and study quasi-values
that are symmetric with respect to permutations from a~ group $G\leq S_n$. 
We classify all the permutation groups $G$ 
that are large enough to assure a unique $G$-symmetric quasi-value, as well as the structure and dimension of the space of all such quasi-values for a general permutation group $G$.

We show how to construct $G$-symmetric quasi-values algorithmically by averaging certain basic quasi-values (marginal operators).
\end{abstract}
\section{Introduction}
A cooperative game is an assignment of a real number to each subset of a given set of players $\Omega$. This illustrates an economic situation 
where a coalition profit depends on the involved players in a generally non-aditive way.
Several approaches deal with the question of redistributing the generated profit to the individual players in a stable or in a ``fair'' way.
The mathematical theory of cooperative games was developed in forties by Neumann and Morgenstern~\cite{Neumann2007}.
Values of games provide a tool for evaluating the contributions of the individual players such that certain natural axioms are satisfied. 
The most famous value is the Shapley value introduced in 1953~\cite{Shapley53} that exists and is unique for all finite sets of~players. 
%

There exist many axiomatic systems on game values such that the Shapley value is
their only solution: the original Shapley's axiomatics~\cite{Shapley53}, Neyman's~\cite{Neyman1989}, Young's~\cite{Young85},  van den Brink's~\cite{Brink2001} and Kar's axiomatics~\cite{Kar2002}. One of its important characteristics is the symmetry with respect to any permutation of players. This means, roughly speaking, that the value of a player is calculated only from his contributions to various coalitions and not from his identity. One may consider this to represent the \emph{equity} of players.  
However, this is probably not a realistic assumption in many real-world situations where personal friendships and linkage play a major role. 
Some examples of values with restricted symmetry were studied, such as the \emph{Owen value}~\cite{Owen:1977} or the
\emph{weighted Shapley value} in~\cite{Kalai:1987}, 
and the formal concept of \emph{quasi-value}, where one completely relaxes any symmetry requirement, 
was introduced by Gilboa and Monderer in 1991~\cite{Gilboa1991}. 
It is known that for a particular player set, there exist infinitely many quasi-values.

In this work, we analyze one particular way of weakening the symmetry axiom. We suppose that a group $G$ of permutations of $\Omega$ is given and define  
a~\emph{$G$-symmetric quasi-value} to be any quasi-value symmetric wrt. all permutations in $G$. Informally, the equity of players is restricted to a group of 
permutations of players, not necessarily to all permutations. The group expresses the measure of symmetry. 
If $G$ is the full symmetry group, then the only $G$-symmetric quasi-value is the Shapley value;  
if $G$ is the trivial group, then it carries no symmetry requirement and each quasi-value is $G$-symmetric. 
Our contribution is the classification of all permutation groups $G$ 
of finite sets of players for which there exists a unique $G$-symmetric quasi-value. 
It turns out that while in the infinite setting for non-atomic games one may reduce the group of symmetries in a number of ways~\cite{Monderer:1990,Neyman:2002},
in the finite setting, only few subgroups of the full permutation group assure uniqueness.
Even if the group $G$ acts transitively on $\Omega$ 
(i.e. for any two players $a,b$, there exists a permutation $\pi\in G$ such that $\pi(a)=b$), there may still exist many $G$-symmetric quasi-values different from the Shapley value. 
We also calculate the dimension of the space of all $G$-symmetric quasi-values for a general permutation group $G$.

In the second section, we give the formal definition of $G$-symmetric quasi-value and some necessary definitions from 
group theory, including our original definition of a supertransitive group action. 
In the third section, we show that the space of all $G$-symmetric quasi-values is an affine subspace of the vector space of all values, 
and derive a formula for its dimension.
We further classify all permutation groups $G$ such that there exists a unique $G$-symmetric quasivalue. 
In the fourth section, we give some examples of $G$-symmetric quasi-values and show how more examples can be constructed by averaging the marginal operators.
The last section (Appendix) contains the proof of an auxiliary statement from group theory that we use in the proof of 
Theorem~\ref{thm:uniqueness}. We postpone this technical issue to the end in order to keep the rest of the text fluent.

\section{Definitions and notation}
\subsection{Cooperative games}
Let $\Omega$ be a set of players. In this paper, we always suppose that $\Omega$ is finite.
\begin{definition}
A  \emph{cooperative game} is a function $v: 2^\Omega\to \R$ such that $v(\emptyset)=0$. 
A cooperative game is \emph{additive}, if for all $T,R\in 2^\Omega$, $R\cap T=\emptyset$ implies $v(R\cup T)=v(R)+v(T)$. 
We denote by $\Gamma$ the set of all cooperative games and $\Gamma_1$ the set of all additive cooperative games. 
%
%
A \emph{game value} is an operator $\varphi:\Gamma\to\Gamma_1$. 
For a game value $\varphi$ and $i\in\Omega$, we define $\varphi_i(v):=\varphi(v) (\{i\})$. 
\end{definition} 
For each game $v$, $\varphi(v)$ is uniquelly determined by the numbers $\varphi_i(v)$.
Shapley theorem \cite{Shapley53} proves the existence and uniqueness of a game value $\varphi$ assuming it satisfies the
following four axioms:
\begin{enumerate}
\item\label{linearity} \emph{Linearity}: 
$\varphi(\alpha v+\beta w)=\alpha \varphi(v)+\beta \varphi(w)$ for all $v, w\in\Gamma$ and $\alpha,\beta\in\R$.
\item\label{null-player} \emph{Null-player property}: if $i\in\Omega$ is a ``null-player'' in a game $v$, 
i.e. $v(R\cup\{i\})=v(R)$ for each $R\subseteq\Omega$, then $\varphi_i(v)=0$. 
\item\label{efficiency} \emph{Efficiency}: $\sum_{i} \varphi_i(v)=v(\Omega)$ for all games $v$.
\item\label{symmetry} \emph{Symmetry} (sometimes called \emph{anonymity}): $\varphi(\pi\cdot v)=\pi\cdot \varphi(v)$ for every permutation $\pi$ of $\Omega$, where 
the game $\pi\cdot v$ is defined by 
$
(\pi\cdot v)(R):=v(\pi^{-1}(R))$
for any $R\subseteq\Omega$.
\end{enumerate}
The value defined by these axioms is called \emph{Shapley value}. 
Axioms \ref{linearity}-\ref{symmetry} are independent. 
Gilles \cite{Gilles2010} and Schmeidler \cite{Schmeidler1969} give examples of values satisfying any 3 of them and not the 4th. 

Any game value satisfying axioms \ref{linearity}, \ref{null-player} and~\ref{efficiency} is called a~\emph{quasi-value}.
In the original economic interpretation, the fourth axiom (Symmetry) is an expression of equity of all the participating players. It can be formulated in a more elegant way
by the commutativity of the following diagram.
\begin{equation}
\begin{CD}
\label{eq:sym}
\Gamma @>\varphi>> \Gamma_1 \\
@VV\pi V   @VV\pi V \\
\Gamma @>\varphi>> \Gamma_1
\end{CD}
\end{equation}
Axiom \ref{symmetry} requires that it commutes for each permutation of players $\pi$. 

The following definition introduces the main object of our study.
\begin{definition}
\label{def:G-quasi-value}
Let $G$ by a group of permutations of $\Omega$.
A \emph{G-symmetric quasi-value} is a game value that satisfies axioms \ref{linearity}, \ref{null-player}, \ref{efficiency} and such that 
$\varphi(\pi\cdot v)=\pi\cdot \varphi(v)$ for every permutation $\pi\in G$. In other words, diagram $(\ref{eq:sym})$ 
commutes for all $\pi\in G$.
\end{definition}

%
Throughout this work, we will need the following basis of the space of cooperative games, introduced in Shapley's original paper~\cite{Shapley53}.
\begin{definition}
\label{unanimity}
The \emph{unanimity basis} is the basis $\{u_R\}_{\emptyset\neq R\subseteq \Omega}$ of the vector space of all cooperative games over the set $\Omega$, 
defined by $u_R(S)=1$ if $R\subseteq S$ and $0$ otherwise.
\end{definition}

\subsection{Group theory}
\label{groups}


We say that a group $G$ \emph{acts} on the set $X$, if $G$ is a subgroup of the group $S_X$ of permutations of $X$. 
Any set $G\cdot x$ is called an \emph{orbit}, or a $G$-orbit of $x$. The set of all $G$-orbits is denoted by $X/G$.
The action of $G$ on $X$ is \emph{transitive}, if for each $x,y\in X$, there exists a $g\in G$ such that $g\cdot x=y$. 
The \emph{stabilizer} of a subset $A\subseteq X$ is the subgroup $G_A$ of all elements $g\in G$ such that $g\cdot A\subseteq A$.
For a subgroup $H$ of $G$, $g\cdot H$ denotes a \emph{left} and $H\cdot g$ a \emph{right coset} of $H$ and any group $H'=g^{-1} H g$ is \emph{conjugate} to $H$.

We introduce here a definition that will help us to describe a property of permutation groups we will need later. 
\begin{definition}
Let $G$ be a group acting on a set $X$. We say that the action is a~\emph{supertransitive action}, if the stabilizer $G_A$ of any subset
$A\subseteq X$ acts transitively on $A$. A permutation group $G\subseteq S_n$ is \emph{supertransitive}, if the stabilizer $G_A$ acts transitively on
each $A\subseteq \{1,\ldots ,n\}$.
\end{definition}

For any $n$, $S_{n-1}$ may be embedded into $S_n$ as a set of permutations preserving one element. 
However, for $n=6$, there exists an embedding of $S_5$ into $S_6$ different from the standard one. 
This embedding $S_5\hookrightarrow S_6$ may be realized as the action of the projective linear group $PGL(2,5)$ on the projective line over $\Z_5$.
The reader may find the details in the literature~\cite[p. 60-61]{Dixon96}, \cite{Scot07}. We will call this embedding 
an \emph{exotic embedding}. It is well known that such a nonstandard embedding is only one up to conjugation by an element of $S_6$.
In this paper, we only need the property that the image of the exotic embedding is a supertransitive subgroup of $S_6$. This is proved 
in the appendix.

\section{Dimension of $G$-symmetric quasi-values}
If a quasi-value is symmetric with respect to a set of permutations, it is also symmetric with respect to any permutation they generate in $S_\Omega$,
hence the set of all symmetries of a quasi-value is always a group.
For a finite set $\Omega$ and a group $G\subseteq S_\Omega$ of permutations,
we denote by  $\mathcal{A}_{G}$ the set of all $G$-symmetric quasi-values. 

We will represent $\mathcal{A}_G$ as a space of matrices.
Each game value $\varphi$ can be represented as a map from $\Gamma$ to $\R^\Omega$ by the natural identification $\Gamma_1\simeq\R^\Omega$.
Choosing the unanimity basis on $\Gamma$ (Def. \ref{unanimity}) and the canonical basis $(e_i)_{i\in\Omega}$ on $\R^\Omega$, 
we may represent linear game values as matrices of the size $|\Omega| \times  (2^{|\Omega|}-1)$.
The null player property applied to the unanimity basis implies $\varphi(u_R)(\{i\})=0$ for each $i\notin R$, because such player $i$ doesn't contribute to any coalition in the game $u_R$.
As a consequence, a matrix $A$ with elements $(a_{iR})_{i\in\Omega,\,\emptyset\neq R\subseteq\Omega}$ corresponds to a linear game value satisfying the null-player-property iff $a_{iR}=0$ for all pairs $(i,R)$
such that $i\notin R$. 
%
Further, the game value satisfies the efficiency axiom iff for any nonempty $R\subseteq\Omega$, $\varphi(u_R)(\Omega)=1$, which translates to a constraint on matrix coefficients
$\sum_{i\in R}a_{iR}=1$ for each $\emptyset\neq R\subseteq\Omega$. 
The $G$-symmetry of a game value requires $\varphi(g\cdot v)=g\cdot(\varphi(v))$ for any game $v$ and permutation $g\in G$, the action of $G$ on $\Gamma$ defined by $(g\cdot v)(R)=v(g^{-1} R)$.
An element $u_R$ from the unanimity basis satisfies $(g\cdot u_R)(S)=u_R(g^{-1}(S))=u_{gR}(S)$, so the unanimity basis is invariant with respect to
the group action and $g\cdot u_R=u_{gR}$. The symmetry axiom is equivalent to
\begin{equation*}
((g\cdot\varphi)(u_R))(\{i\})=(\varphi(u_{gR}))(\{i\}),
\end{equation*}
for all $i\in\Omega$ and $\emptyset\neq R\subseteq\Omega$.
The left-hand side is equal to $\varphi(u_R)(\{g^{-1}i\})$. So, in~the matrix representation of $\varphi$, the symmetry axiom translates to the condition 
$a_{(g^{-1}i)\,R}=a_{i\,(gR)}$, or simply $a_{iR}=a_{(gi) \,(gR)}$ for all $i\in\Omega$,  $\emptyset\neq R\subseteq\Omega$ and $g\in G$.

Summarizing this, we have the following.
\begin{lemma}
\label{observ}
Choosing the unanimity basis of $\Gamma$ and the canonical basis of $\R^\Omega\simeq\Gamma_1$, $\mathcal{A}_G$ may be identified with a set of matrices $A=(a_{iR})$
with elements satisfying the following equations:
\begin{itemize}
\item{$a_{iR}=0$ if $i\notin R$},
\item{The sum of elements in each column is $1$},
\item{Matrix elements $a_{iR}$ are constant on the orbits of the $G$-action $g\cdot (i,R)=(gi, gR)$.}
\end{itemize}
\end{lemma}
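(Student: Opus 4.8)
The plan is to verify that each of the three matrix conditions corresponds exactly to one of the defining axioms, after first recording why a linear value is faithfully encoded by the matrix $A=(a_{iR})$ with $a_{iR}:=\varphi(u_R)(\{i\})$. Since $\{u_R\}$ is a basis of $\Gamma$ and $\Gamma_1\simeq\R^\Omega$ via $w\mapsto(w(\{i\}))_{i\in\Omega}$, the linearity axiom guarantees that $\varphi$ is determined by, and may be reconstructed from, these numbers; this gives the bijection between linear values and matrices of size $|\Omega|\times(2^{|\Omega|}-1)$ that underlies the identification. The task then reduces to showing that, under this bijection, the three remaining axioms cut out precisely the three listed equation systems.

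Much of the computation has already been carried out in the text preceding the statement, so the work is mainly to assemble it carefully. For efficiency and symmetry, the reduction to the unanimity basis is immediate from linearity: $\sum_i\varphi_i(v)=v(\Omega)$ for all $v$ holds iff it holds on each $u_R$, giving $\sum_i a_{iR}=u_R(\Omega)=1$, and combined with the null-player condition this is exactly the column-sum equation $\sum_{i\in R}a_{iR}=1$. Likewise, using $g\cdot u_R=u_{gR}$, the commutativity of diagram $(\ref{eq:sym})$ for every $g\in G$ is equivalent to $a_{iR}=a_{(gi)\,(gR)}$ for all $g\in G$, which is precisely the statement that $a_{iR}$ is constant on the orbits of the action $g\cdot(i,R)=(gi,gR)$.

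The one point requiring slightly more care, and the step I expect to be the main (though mild) obstacle, is showing that the null-player property for \emph{all} games, not merely for the unanimity games, is equivalent to $a_{iR}=0$ whenever $i\notin R$. Here I would use the characterization that player $i$ is a null player in $v=\sum_S c_S u_S$ precisely when $c_S=0$ for every $S$ containing $i$, that is, when $v$ lies in the span of $\{u_S:i\notin S\}$. By linearity, $\varphi_i(v)=\sum_{S:\,i\notin S}c_S\,a_{iS}$ for every such $v$, and this vanishes identically in $v$ iff $a_{iS}=0$ for all $S$ with $i\notin S$. This closes the equivalence in both directions and, together with the two immediate reductions above, establishes that $\mathcal{A}_G$ is exactly the set of matrices satisfying the three stated conditions.
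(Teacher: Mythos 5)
Your proof is correct and follows essentially the same route as the paper, which establishes the lemma by the axiom-by-axiom translation in the discussion preceding its statement. The only difference is that you carefully prove both directions of the null-player equivalence via the characterization of null players in terms of the unanimity (Harsanyi) coefficients, whereas the paper only asserts this step informally; your version is the more complete one.
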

All these conditions are linear equations for matrix elements $a_{iR}$ and they are all satisfied by the Shapley value. 
So, $\mathcal{A}_G$ is a~nonempty affine space.
\begin{theorem}
\label{dimension}
Let $X=\{(i,R);\,i\in R\subseteq\Omega\}$, 
$\chi=\{R;\,\,\emptyset\neq R\subseteq\Omega\}$ and let $G\subseteq S_G$ be a group of permutations acting on sets $X$ and $\chi$,
extending naturally its action on $\Omega$. Then the dimension of $\mathcal{A}_G$ is $|X/G|-|\chi/G|$. Explicitly it can also be expressed as 
\begin{equation}
\label{dimension}
\dim\,\mathcal{A}_G =(\frac{dZ_{G}}{dx_{1}}-Z_{G})|_{(2,2 \ldots 2)}+1
\end{equation}
where $Z_G$ is the cycle index of the group $G$ 
\begin{equation}
\label{cycle_index}
Z_{G}(x_{1}...x_{n})=\frac{1}{|G|}\sum_{\pi\in G}x_{1}^{j_{1}(\pi)}\cdots x_{n}^{j_{n}(\pi)},
\end{equation}
$j_k(\pi)$ denotes the number of cycles of length $k$ in the permutation $\pi$~\cite[p. 85]{analcomb}.
\end{theorem}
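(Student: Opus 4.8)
The plan is to work entirely within the matrix description of Lemma~\ref{observ} and then translate the two resulting orbit counts into the cycle index by Burnside's lemma. By Lemma~\ref{observ}, a matrix in $\mathcal{A}_G$ is completely determined once we fix the common value of $a_{iR}$ on each orbit of the $G$-action on $X$, all other entries (those with $i\notin R$) being forced to vanish. Thus, before imposing efficiency, the matrices form a linear space of dimension $|X/G|$, with one coordinate per $X$-orbit. It then remains to understand the efficiency constraints $\sum_{i\in R}a_{iR}=1$, indexed by $R\in\chi$, as affine equations in these orbit-coordinates, and to count how many of them are independent.

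First I would observe that these equations are themselves $G$-invariant: substituting $i\mapsto gi$ and using $a_{(gi)(gR)}=a_{iR}$ shows that the equation attached to $R$ coincides with the one attached to $gR$, so there are at most $|\chi/G|$ distinct efficiency equations. The crux is to prove that these $|\chi/G|$ equations are linearly independent. For this I would use the projection $p\colon X/G\to\chi/G$, $[i,R]\mapsto[R]$, which is well defined because the $G$-action commutes with the map $(i,R)\mapsto R$. Picking a representative $R$ of a $\chi$-orbit, the left-hand side $\sum_{i\in R}a_{iR}$ becomes a linear combination, with positive integer coefficients, of precisely those orbit-coordinates lying in the fiber $p^{-1}([R])$, and it is nontrivial since $R\neq\emptyset$. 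Because the fibers over distinct $\chi$-orbits are disjoint, the corresponding equations have pairwise disjoint supports and are therefore linearly independent. Since $\mathcal{A}_G$ is nonempty (it contains the Shapley value), its dimension is exactly $|X/G|-|\chi/G|$.

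It then remains to convert both orbit counts into the cycle index via Burnside's lemma. For $\chi$, a permutation $\pi$ fixes a subset $R\subseteq\Omega$ iff $R$ is a union of cycles of $\pi$, so it fixes $2^{c(\pi)}$ subsets, where $c(\pi)=\sum_k j_k(\pi)$ is the total number of cycles; Burnside then gives $|2^{\Omega}/G|=Z_G(2,\ldots,2)$, and subtracting the single orbit of $\emptyset$ yields $|\chi/G|=Z_G(2,\ldots,2)-1$. For $X$, a pair $(i,R)$ is fixed by $\pi$ iff $i$ is a fixed point of $\pi$ and $R$ is a $\pi$-invariant subset containing the one-cycle $\{i\}$; counting the invariant subsets that contain one prescribed cycle gives, for the number of $\pi$-fixed pairs, $|\mathrm{Fix}_X(\pi)|=j_1(\pi)\,2^{c(\pi)-1}$. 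Comparing this with
\[
\frac{\partial Z_G}{\partial x_1}\Big|_{(2,\ldots,2)}=\frac{1}{|G|}\sum_{\pi\in G}j_1(\pi)\,2^{\,c(\pi)-1}
\]
identifies $|X/G|$ with $\partial Z_G/\partial x_1$ evaluated at $(2,\ldots,2)$. Substituting both expressions into $|X/G|-|\chi/G|$ produces the stated formula $(\partial Z_G/\partial x_1-Z_G)|_{(2,\ldots,2)}+1$.

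I expect the main obstacle to be the independence of the efficiency equations; once the projection $p$ and the disjoint-support observation are in place, the rest is routine, and the Burnside computations, though requiring a careful count of cycle-invariant subsets containing a fixed point, are mechanical.
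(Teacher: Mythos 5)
Your proof is correct and takes essentially the same approach as the paper: both reduce to the matrix picture of Lemma~\ref{observ}, use the $G$-equivariant projection $p\colon X\to\chi$ to see that each $\chi$-orbit contributes exactly one independent efficiency constraint supported on its fiber in $X/G$ (the paper phrases this as $k_m-1$ degrees of freedom per orbit $m$, you phrase it as rank of a disjointly supported system), and then apply Burnside's lemma with the identical counts $2^{c(\pi)}$ and $j_1(\pi)2^{c(\pi)-1}$ to obtain the cycle-index formula.
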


\begin{proof}
We will 
identify elements of $\mathcal{A}_G$ with matrices as described in Lemma~\ref{observ}. 
Let $p: X\to \chi$ be the map $(i,R)\to R$. For any $x=(i,R)\in X$ and $g\in G$, 
$p(gx)=g(p(x))$. 
For $\emptyset\neq R\subseteq\Omega$, 
the stabilizer $G_R$ acts on $R$ and $R$ splits into $k_R$ orbits $\{R_1,\ldots, R_{k_R}\}$ 
with respect to this action. 
If $R'=gR$, 
then the stabilizer of $R'$ is $g G_R g^{-1}$ and $g$ maps each $G_R$-orbit $R_i\subseteq R$ 
bijectively onto a $G_{R'}$-orbit $R'_i\subseteq R'$. 
So, $k_R=k_{R'}$ and $|R_i|=|R'_i|$ for $i=1,\ldots, k_R$.
For $m\in\chi/G$, we define $k_m:=k_R$ for any $R\in m$ and $l_{mi}=|R_i|$ for $i=1,\ldots,k_m$.
These numbers are independent on the choice of $R$.

We will say that $m\in\chi/G$ {\it contains} an orbit $Gx\in X/G$, if $p(x)\in m$. 
Each $m\in\chi/G$ contains $k_m$ orbits $\{o_1,\ldots, o_{k_m}\}\subseteq X/G$ and
we may choose real numbers $c_{mi}$ such that $\sum_{i=1}^{k_m} c_{mi} l_{mi}=1$
with $k_m-1$ degrees of freedom. Choosing such numbers $c_{mi}$ for all $m\in\chi/G$
gives 
$$\sum_{m\in \chi/G} (k_m-1)=\sum_{m\in M} k_m - |\chi/G|=|X/G|-|\chi/G|$$
degrees of freedom. Any such choice of $c_{mi}$ defines a matrix of game value 
$$
a_{iR}=\begin{cases}
{c_{mi} \,\,\text{if \,$i\in R_i\subseteq R\in m$}}\\
{0\,\,\text{ if\,\, $i\notin R$}}
\end{cases}
$$
These are exactly matrices $A$ constant on the orbits of $X$ satisfying $\sum_i a_{iR}=1$ for all $R$ and $a_{iR}=0$ for all $i\notin R$.
The number of degrees of freedom for the choice
of $c_{mi}$ is equal to the dimension of $\mathcal{A}_G$. This proves the first part.

Burnside lemma \cite[p. 58]{Rotman95} enables to express the number of orbits of a group action in an explicit way. If a finite group
$H$ acts on a finite set $Y$, then 
\begin{equation}
\label{burnside}
|Y/H|=\frac{1}{|H|} \sum_{h\in H}\,|\{y\in Y\,|\,h(y)=y\}|.
\end{equation}
A permutation $\pi\in G$ fixes those sets $R\subseteq\Omega$ that don't split any cycle of $\pi$. 
There exists $2^{\#\,cycles(\pi)}$ such sets, $2^{\#\,cycles(\pi)}-1$ of them nonempty. So,
$$|\chi/G|=\big(\frac{1}{|G|}\,\sum_{\pi\in G} 2^{\#\,cycles(\pi)}\big)-1.$$
Elements of $X$ fixed by $\pi$ are pairs $(i,R)$ such that $i\in R$, $\pi(i)=i$ and $\pi(R)=R$.
There exists $\#\,fixedpoints(\pi)*2^{\#\,cycles(\pi)-1}$ such pairs. We derived the following equation:
$$
\dim\mathcal{A}_G=\frac{1}{|G|}\big(\sum_{\pi\in G}(\#\mathrm{fixedpoints}(\pi)*2^{\#cycles(\pi)-1})-\sum_{\pi\in G}2^{\#cycles(\pi)}\big)+1.
$$
The statement of the theorem follows from this by a direct computation. $\square$
\end{proof}
The cycle index $Z_G$ is known in a more explicit form than $(\ref{cycle_index})$ for many subgroups of $S_n$ and 
it has also been generalized and computed for finite classical groups~\cite{Fulman:97}.

Further, we will show for which groups $G$ the dimension of $\mathcal{A}_G$ is zero, i.e. for which $G$ the only $G$-symmetric quasi-value 
is the Shapley value.
In Section \ref{groups}, we defined a group $G\subseteq S_\Omega$ to be supertransitive, if
the stabilizer $G_R$ acts transitively on $R$ for each subset $R\subseteq\Omega$. In other words,
if for each $R$ and each $i,j\in R$, there exists a $g\in G$ such that $g(R)=R$ and $g\cdot i=j$. We will show that this condition is equivalent to 
the existence of a unique $G$-symmetric quasi-value.

\begin{theorem}
\label{thm:uniqueness}
Let $\Omega$ be finite and $G\leq S_\Omega$. There exists a unique $G$-symmetric quasi-value if and only if $G$ acts supertransitively on $\Omega$.
Equivalently, this is if and only one of the following conditions is satisfied:
\begin{itemize}
\item{$G=S_\Omega$, the full symmetric group}
\item{$|\Omega|>3$ and $G=A_\Omega$, the alternating group}
\item{$|\Omega|=6$ and $G$ is the image of an exotic embedding $S_5\hookrightarrow S_6$ (see Section~\ref{groups}).}
\end{itemize}
\end{theorem}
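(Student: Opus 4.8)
\emph{Overview.} The plan is to split the statement into two logically independent parts: first reduce the existence of a \emph{unique} $G$-symmetric quasi-value to the purely combinatorial condition of supertransitivity, and then classify the supertransitive permutation groups explicitly, matching them against the three families in the statement.

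\emph{Reduction to supertransitivity.} Since $\mathcal{A}_G$ is a nonempty affine space containing the Shapley value, uniqueness is equivalent to $\dim\mathcal{A}_G=0$. By the previous theorem $\dim\mathcal{A}_G=|X/G|-|\chi/G|$, and its proof yields the sharper identity $\dim\mathcal{A}_G=\sum_{m\in\chi/G}(k_m-1)$, where $k_m$ is the number of orbits of the setwise stabilizer $G_R$ on a representative $R\in m$. As every summand is non-negative, the dimension vanishes if and only if $k_m=1$ for all $m$, i.e. if and only if $G_R$ acts transitively on $R$ for every $\emptyset\neq R\subseteq\Omega$. This is exactly supertransitivity, so the first equivalence is immediate and requires no new work.

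\emph{The listed groups are supertransitive.} For $G=S_\Omega$ the stabilizer $G_R$ contains the full symmetric group on $R$. For $G=A_\Omega$ with $|\Omega|>3$, I would exhibit, for any $i,j\in R$, an even permutation fixing $R$ setwise and sending $i\mapsto j$: a $3$-cycle $(i\,j\,k)$ with $k\in R$ when $|R|\geq 3$, and the product of $(i\,j)$ with a transposition of two points of $\Omega\setminus R$ when $|R|=2$ (this is where $|\Omega|>3$ enters, and indeed $A_2,A_3$ fail). The exotic image of $S_5$ in $S_6$ is supertransitive by the auxiliary result established in the appendix.

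\emph{Only these groups are supertransitive; the main obstacle.} The hard direction rests on the remark that the setwise stabilizer cannot separate a set from its complement, $G_R=G_{\Omega\setminus R}$, so supertransitivity forces $G_R$ to be transitive on $R$ \emph{and} on $\Omega\setminus R$ at once. I would first prove the lemma that supertransitivity implies \emph{set-transitivity} (transitivity on $k$-subsets for every $k$) by induction on $k$: given $k$-subsets $A,B$, choose $a\in A$ and $b\in B$, move $A\setminus\{a\}$ onto $B\setminus\{b\}$ by the inductive hypothesis, and then use the transitivity of $G_{B\setminus\{b\}}$ on $\Omega\setminus(B\setminus\{b\})$ — which is supertransitivity applied to that complement — to carry the image of $a$ onto $b$. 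Once set-transitivity is in hand, I would invoke the classical (pre-classification) theorem of Beaumont and Peterson, according to which the only set-transitive groups besides $S_n$ and $A_n$ are $\mathrm{AGL}(1,5)$ on $5$ points, $\mathrm{PGL}(2,5)\cong S_5$ on $6$ points, and $\mathrm{PGL}(2,8)$ and $P\Gamma L(2,8)$ on $9$ points. Each exceptional group other than $\mathrm{PGL}(2,5)$ is then ruled out by an order obstruction on the complementary block: the setwise stabilizer of a suitable $k$-subset is too small, or has order not divisible by $n-k$, to act transitively on its $(n-k)$-element complement — for example the $3$-subset stabilizer in $\mathrm{AGL}(1,5)$ has order $2<3$, the $4$-subset stabilizer in $\mathrm{PGL}(2,8)$ has order $4$ and cannot be transitive on the complementary $5$-subset, and the $4$-subset stabilizer in $P\Gamma L(2,8)$ has order $12$, not divisible by $5$. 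This leaves precisely $S_\Omega$, $A_\Omega$ (for $|\Omega|>3$) and the exotic $S_5\hookrightarrow S_6$. The main obstacle is this classification step: the reduction from supertransitivity to set-transitivity is clean, but it must be supplemented by the external Beaumont–Peterson classification together with the case-by-case complement check, whose conceptual heart is that supertransitivity demands transitivity on \emph{both} blocks $R$ and $\Omega\setminus R$.
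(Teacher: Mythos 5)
Your proposal follows the same overall architecture as the paper (reduce uniqueness to supertransitivity, show supertransitivity implies set-transitivity, invoke Beaumont--Peterson, eliminate the exceptional groups, and handle $A_n$ and the exotic $S_5\le S_6$ separately), and it is essentially correct, but several of your local arguments differ from the paper's in ways worth recording. For the equivalence with uniqueness, the paper argues directly on the matrix representation (supertransitivity forces $a_{iR}=1/|R|$; failure of it lets one perturb a column), whereas you read the answer off the identity $\dim\mathcal{A}_G=\sum_{m}(k_m-1)$ from the dimension theorem --- cleaner, given that theorem is already proved. For set-transitivity, your induction (map $A\setminus\{a\}$ onto $B\setminus\{b\}$, then use transitivity of $G_{B\setminus\{b\}}=G_{\Omega\setminus(B\setminus\{b\})}$ on the complement to finish) is actually tighter than the paper's one-line argument, which only treats sets differing in one element and glosses over the fact that the stabilizer of $B\cup\{i,j\}$ need not preserve $B$. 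For eliminating the exceptions, the paper uses the global obstruction that $|G|$ must be divisible by every $k\le n$ (hence by $\mathrm{lcm}(1,\dots,n)$), while you use stabilizer-order obstructions on a well-chosen subset and its complement; both work, and your computations (orders $2$, $4$, $12$ against transitivity on sets of size $3$, $5$, $5$) check out. The one genuine slip is that you list only four Beaumont--Peterson exceptions: the classification has five, the one you omit being a set-transitive subgroup of $S_5$ of order $10$ (contained in the order-$20$ group $\mathrm{AGL}(1,5)$). It is dispatched instantly by your own method --- its $2$-subset stabilizer has order $1$, so it cannot be transitive on a $2$-set --- but as written your case analysis is incomplete and should be amended. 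Finally, note that you defer the supertransitivity of the exotic $S_5\hookrightarrow S_6$ to the appendix exactly as the paper does, so nothing is missing there.
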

\begin{proof}
We will work with the matrix representation of $\mathcal{A}_G$, described in Lemma~\ref{observ}.
Let $(a_{iR})$ be a matrix representing a value in $\mathcal{A}_G$.

If the action of $G$ on $\Omega$ is supertransitive, then for each 
$\emptyset\neq R \subseteq\Omega$, all elements $\{(i,R);\,i\in R\}$ lie on the same $G$-orbit, so all the corresponding matrix elements 
$a_{iR}$ are equal. The null-player property
implies that $a_{iR}=0$ for $i\notin R$ and together with the efficiency
condition we obtain that for each $i\in R$, $a_{iR}=1/|R|$. This implies uniqueness.

If the action of $G$ on $\Omega$ is not supertransitive, then there exists
a nonempty subset $\tilde{R}\subseteq\Omega$ such that the stabilizer $G_{\tilde{R}}$ 
has not a transitive action on $\tilde{R}$.
So, $\tilde{R}$ contains at least two $G_{\tilde{R}}$-orbits. We may define the matrix $a_{iR}$ as follows.
In the matrix column corresponding to $\tilde{R}$ we choose $a_{i\tilde{R}}=0$ if $i\notin \tilde{R}$ and the other elements $a_{j\tilde{R}}$ arbitrary, constant on $G_{\tilde{R}}$-orbits
and such that $\sum_j a_{j\tilde{R}}=1$. For all $R'$ on the $G$-orbit of $R$, we define the coefficients $a_{iR'}$ in a unique way so that they are constant on the $G$-orbits
and the remaining matrix elements may be equal to elements of the original Shapley matrix.
In this way, we may construct an infinite number
of different $G$-symmetric quasi-values which proves that $\dim \mathcal{A}_G\geq 1$.

For the classification part, it remains to prove that the groups listed in the theorem are exactly the groups acting supertransitively on $\{1,\ldots, n\}$.
The proof of this is technical and we postpone it to the Appendix (Chapter \ref{appendix}). $\square$
\end{proof}

\section{Consequences}
\subsection{Examples}
First we give some examples of groups and $G$-symmetric quasi-values. In all these examples, we assume that the player set $\Omega$ consists of $n$ players.
\bigskip\\
{\bf Example 1.}
Let $G_1=\{\rm{id}\}$ be the trivial group. In this case, any quasi-value is $G_1$-symmetric.  
Consider a selector $\gamma: 2^\Omega \to \Omega$ with $\gamma(R)\in R$ for all $\emptyset\neq R\subseteq\Omega$. Now we define the value $\varphi$ as 
\begin{equation}
\label{example:triv}
\varphi_i(v)=\sum_{i=\gamma(R)} \Delta_v(R)
\end{equation}
where $\Delta_v(R)\in\R$ is a \emph{Harsanyi dividend} of the coalition $R\subseteq \Omega$ defined by
$
\Delta_v(R)=\sum_{T\subseteq R} (-1)^{|R|-|T|} v(T).
$
It was shown in~\cite{Derks:2000} that such values satisfy the axioms for quasi-values.~\footnote{In the matrix representation, such values
correspond to matrices $a_{i\,R}=\delta_{i\gamma(R)}$.}
The cycle index of the trivial group is $Z(x_1)=x_1^n$ and substituting into $(\ref{dimension})$ yields $\dim\mathcal{A}_{G_1}=n 2^{n-1}-2^n+1$.
However, the number of selectors $\gamma: 2^{\Omega}\to\Omega$ is much larger, so many of the quasi-values defined by $(\ref{example:triv})$ are affine dependent.\footnote{
For $n\geq 4$, $\dim\mathcal{A}_{G_1}$ is strictly smaller than $n!-1$ which implies that the set of marginal operators (defined
in Section~\ref{marginal}) is also affine dependent.}

\bigskip
{\bf Example 2.} (``Caste system'')
The set $\Omega$ is split into $k$ nonempty disjoint subsets (``castes'') $\Omega_1,\ldots,\Omega_k$ and $G_2$ is chosen so that it guarantees 
equity within 
each $\Omega_i$. Formally, $G_2=\{\pi\in S_\Omega\,|\,\forall i\,\, \pi(\Omega_i)=\Omega_i\}$.


Some examples of $G_2$-symmetric quasivalues have been described in the literature.
The \emph{Owen value}, defined in~\cite{Owen:1977},
can be obtained as the expected value of \emph{marginal operators} (see Section~\ref{marginal}), 
if we first randomely choose an order of the castes and then the order of the players within each caste.
Another related concept is the \emph{weighted Shapley value}, studied by Kalai and Samet in~\cite{Kalai:1987}.
Here an order of the castes is given and within each
caste, the profit is diveded among players proportional to their \emph{weights}. In the case of equal weights of all players,
the weighted Shapley value is symmetric with respect to all $G_2$-permutations.

The cycle index is $Z_{G_2}=\prod_{r=1}^k\, Z_{S_{\Omega_r}}$. 
We know from the proof of Theorem~\ref{dimension} that 
$|\chi/G|=\frac{1}{|G|} \sum_g 2^{\# cycles(g)}$ for each set $\chi$ with a $G$-action. In particular, for $G=S_n$,
$|2^\Omega/G|=n+1$, because $S_n$-orbits of $2^\Omega$ are $O_s=\{R\subseteq\Omega\,|\,|R|=s\}$ for $s=0,1,\ldots, n$.
This enables as to calculate
$$
Z_{S_n}|_{(2,\ldots,2)}=\frac{1}{n!}\sum_{\pi\in S_n} 2^{j_1(\pi)+\ldots+j_n(\pi)}=
\frac{1}{n!}\sum_{\pi\in S_n} 2^{\# cycles(\pi)}=|2^\Omega/S_n|=n+1.
$$
If $G=S_n$, then the Shapley value is the only game value, so it follows from Theorem \ref{dimension}
that $(\frac{dZ_{S_n}}{dx_1}-Z_{S_n})|_{(2,\ldots,2)}+1=0$ and $\frac{dZ_{S_n}}{dx_1}|_{(2,\ldots,2)}=n$.
So, for $G_2=\prod_{r=1}^k S_{\Omega_r}$
$$
\frac{dZ_{G_2}}{dx_{1}}|_{(2,2...2)}=\big(\sum_{r=1}^{k}\frac{dZ_{S_{\Omega_{r}}}}{dx_{1}}
\prod_{s\neq r}Z_{S_{\Omega_{s}}}\big)|_{(2,2...2)}=\sum_{r=1}^{k}|\Omega_{r}|\prod_{s\neq r}(1+|\Omega_{s}|)
$$
and 
$$
\dim\mathcal{A}_{G_2}=(\sum_{r=1}^{k}\frac{|\Omega_{r}|}{1+|\Omega_{r}|}-1)\prod_{r=1}^{k}(1+|\Omega_{r}|)+1.
$$
For the case of two castes $k=2$ this simplifies to $|\Omega_1|\times |\Omega_2|$.
\bigskip \\
{\bf Example 3.} (Cyclic group)
This example illustrates that transitive group action does not imply a unique $G$-symmetric quasi-value.
If $G_3$ is the cyclic group $C_n\subseteq S_n$, the cycle index is 
$Z_{C_n}=\frac{1}{n}\sum_{f|n}\phi(f) x_f^{n/f}$, where $\phi(f)$ is the Euler totient function 
$\phi(f)=p_1^{k_1-1}(p_1-1)\ldots p_r^{k_r-1}(p_r-1)$, where $f=p_1^{k_1}\ldots p_r^{k_r}$ is
the prime number decomposition.\cite[p. 86]{analcomb}. Substituting into the formula in Theorem \ref{dimension}
gives
$$
\dim{\mathcal{A}_{G_3}}=2^{n-1}-\frac{1}{n}\sum_{f|n} \phi(f) 2^{n/f} +1.
$$
In the case of $n=3$, the dimension turns out to be $2^2-\frac{1}{3}(2^3+2\times 2)+1=1$, so there exists a one-dimensional space of 
quasi-values symmetric with respect to cyclic permutations of players.
\subsection{Shapley-value as an expected value of non-uniformly distributed marginal vectors}
\label{marginal}
Suppose that $\Omega=\{1,2,\ldots,n\}$, i.e. an order is given on the set of player.  For a game $v\in\Gamma$ and a permutation $\pi\in S_n$,
we may define a quasi-value $m_\pi$ by $(m_\pi)(v)_{\pi(1)}=v(\pi(1))$ and 
$$(m_\pi(v))_{\pi(i)}=v(\{\pi(1),\pi(2),\ldots, \pi(i)\})-v(\{\pi(1),\pi(2),\ldots, \pi(i-1)\})$$
for $i=2,\ldots, n$. We call $m_\pi$ the \emph{marginal operator} and $m_\pi(v)$ the \emph{marginal vector}~\cite[p. 19]{Branzei2005}. 
It corresponds to a situation where the players arrive in the order $\pi(1), \pi(2),\ldots, \pi(n)$ and each player is assigned the value of
his or her contribution to the coalition of all players that have arrived before.
The evaluation of $m_\pi$ on a game $u_R$ from the unanimity basis is $m_\pi(u_R)(\{\pi(i)\})=u_R(\pi(1),\ldots,\pi(i))-u_R(\pi(1),\ldots, \pi(i-1))$
which is equal to $1$ if and only if $\pi(i)\in R$ and $\pi(j)\notin R$ for $j>i$ and $0$ otherwise. 
After the identification \ref{observ}, we can represent $m_\pi$ is as a matrix
$$
({m}_{\pi})_{iR}=\begin{cases}
1\,\, \textrm{iff}\,\, i\in R\,\, \textrm{and $\pi^{-1}(i)=\max \pi^{-1}(R)$}\\
0\,\,\textrm{ otherwise}.
\end{cases}
$$
A theorem of Weber \cite{Weber1988} shows that if $\pi$ is a random permutation taken from a uniform distribution on $S_n$ then for any game
$v$, the expected value of a marginal operator $m_\pi$ is the Shapley value. 
This can be generalized to the following statement.
\begin{proposition}
Let $G$ be a subgroup of $S_n$ and $A^\pi$ be a probability distributioin on $S_n$ constant on the right cosets $\{G\cdot \pi\}_\pi$, 
i.e. $A^\pi=A^{g \pi}$ for all $g\in G$ and $\pi\in S_n$. Then $\sum A^\pi\,m_\pi$ is a $G$-symmetric quasi-value. 
\end{proposition}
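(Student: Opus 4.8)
The plan is to verify the two requirements separately: that $\varphi:=\sum_{\pi} A^\pi m_\pi$ is a quasi-value, and that it is $G$-symmetric. The first requirement is immediate. Each marginal operator $m_\pi$ is a quasi-value, and the three defining axioms (Linearity, Null-player, Efficiency) are all preserved under an affine combination whose coefficients sum to one. Since $A^\pi\geq 0$ and $\sum_\pi A^\pi=1$, the operator $\varphi$ is a convex combination of the $m_\pi$ and hence again a quasi-value: Null-player holds because each $m_\pi$ annihilates a null player, and for Efficiency one computes $\sum_i\varphi_i(v)=\sum_\pi A^\pi\sum_i(m_\pi(v))_i=\sum_\pi A^\pi\, v(\Omega)=v(\Omega)$, using $\sum_i(m_\pi(v))_i=v(\Omega)$ and $\sum_\pi A^\pi=1$.

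The heart of the argument is the symmetry, and the key step I would isolate first is the transformation rule
\begin{equation*}
g\cdot m_\pi = m_{g\pi}\qquad\text{for all }g,\pi\in S_n.
\end{equation*}
To prove it I would work in the matrix representation of Lemma~\ref{observ}, in which $g$ acts on a value with matrix $(a_{iR})$ by $(g\cdot a)_{iR}=a_{(g^{-1}i)\,(g^{-1}R)}$; this is precisely the action whose fixed points are the $G$-symmetric matrices characterized there. Applying it to the explicit matrix of $m_\pi$ gives $(g\cdot m_\pi)_{iR}=1$ exactly when $g^{-1}i\in g^{-1}R$ and $\pi^{-1}(g^{-1}i)=\max\pi^{-1}(g^{-1}R)$. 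Since $g^{-1}i\in g^{-1}R\Leftrightarrow i\in R$ and $\pi^{-1}g^{-1}=(g\pi)^{-1}$, this reads $i\in R$ and $(g\pi)^{-1}(i)=\max(g\pi)^{-1}(R)$, which is exactly the entry $(m_{g\pi})_{iR}$. Hence $g\cdot m_\pi=m_{g\pi}$.

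With the transformation rule in hand, the conclusion follows by a short reindexing. For $g\in G$,
\begin{equation*}
g\cdot\varphi=\sum_{\pi\in S_n} A^\pi\,(g\cdot m_\pi)=\sum_{\pi\in S_n} A^\pi\, m_{g\pi}=\sum_{\sigma\in S_n} A^{g^{-1}\sigma}\, m_\sigma,
\end{equation*}
where I substituted $\sigma=g\pi$ and used that $\pi\mapsto g\pi$ is a bijection of $S_n$. Because $g^{-1}\in G$ and $A$ is constant on the right cosets of $G$, the hypothesis yields $A^{g^{-1}\sigma}=A^\sigma$, so the last sum equals $\sum_\sigma A^\sigma m_\sigma=\varphi$. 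Thus $g\cdot\varphi=\varphi$ for every $g\in G$, i.e. $\varphi$ is $G$-symmetric.

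The only genuinely substantive step is establishing $g\cdot m_\pi=m_{g\pi}$; once it is available, both the quasi-value property and the $G$-invariance follow by linear bookkeeping and the coset-constancy of $A$. I therefore expect the main care to go into correctly tracking the action of $g$ on the indices, in particular the identity $\pi^{-1}g^{-1}=(g\pi)^{-1}$, which is what converts a relabelling of the players into a change of the underlying order $\pi\mapsto g\pi$ while leaving the positions (and hence the $\max$ in the definition of $m_\pi$) unchanged.
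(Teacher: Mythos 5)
Your proof is correct and follows essentially the same route as the paper's: your key identity $g\cdot m_\pi=m_{g\pi}$ is just the operator-level restatement of the paper's equation $(g\cdot m_\pi)(u_R)=m_{g\pi}(u_{gR})$, and the subsequent reindexing of the sum using the right-coset constancy $A^{g\pi}=A^\pi$ is the same computation. The only addition is that you explicitly verify that a convex combination of marginal operators satisfies the quasi-value axioms, a routine point the paper leaves implicit.
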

\begin{proof}
We will show that the identity holds if evaluated on games from the unanimity basis of $\Gamma$. For the game $u_R$ (Definition \ref{unanimity}), we start with the following
equation:
\begin{equation}
\label{marginality-invariance}
(g \cdot m_\pi) (u_R)=m_{g\, \pi} (u_{gR}).
\end{equation}
To prove this, we evaluate both sides on $\{i\}$ and rewrite the left-hand side to the equivalent equation
$$
(m_\pi (u_{R})) (\{g^{-1}(i)\})=(m_{g\,\pi} (u_{gR})) (\{i\}).
$$
Both sides are equal to $1$ if and only if $\pi^{-1}(g^{-1}(i))=\max \pi^{-1}(R)$ and $0$ otherwise, which proves $(\ref{marginality-invariance})$ for all $R\subseteq\Omega$, $i\in\Omega$
and $g\in G$.
The $G$-symmetry of $\sum_{\pi\in S_n} A^\pi m_\pi$ follows from 
\begin{align*}
& \big(g\cdot \sum_{\pi\in S_n} A^\pi m_\pi\big) (u_R)=\sum_{\pi\in S_n} A^{\pi} (g\cdot m_\pi)(u_R)=\sum_{\pi\in S_n} A^\pi m_{g \pi} (u_{gR})=\\
& =\sum_{\pi\in S_n} A^{g \pi} m_{g\,\pi} (u_{gR})=\sum_{g\,\pi=\pi'\in S_n} A^{\pi'} m_{\pi'} (g\cdot u_R)=\big((\sum_{\pi'\in S_n} A^{\pi'} m_{\pi'}) \cdot g\big) (u_R)
\end{align*}
where we used $(\ref{marginality-invariance})$ in the second and $A^{\pi}=A^{g \pi}$ in the third equality. $\square$
\end{proof}

An immediate consequence of the classification Theorem $\ref{thm:uniqueness}$ is that for $|\Omega|>3$ any quasi-value symmetric 
with respect to the alternating group $A_n$
is already the Shapley value. It follows from the last proposition that 
$\sum_\pi A^\pi m_\pi$ is the Shapley value not only for $A^\pi=\frac{1}{n!}$ but also for $A^\pi=\frac{s}{n!}$ for $\pi$ even 
and $A^\pi=\frac{2-s}{n!}$ for $\pi$ odd, $s\in [0,2]$. In fact, there are many other possibilities how to express the Shapley value 
as a convex combination of marginal operators.
The space of all quasi-values on $\Omega$ is  $(n2^{n-1}-2^{n}+1)$-dimensional and the set of all probability distributions
on $S_n$ is a $(n!-1)$-dimensional convex region in $\R^{n!}$, so there are {at least} $n!-n2^{n}+2^{n-1}-2$ degrees of freedom 
for the choice of a distribution $A^\pi$ such that $\sum_\pi A^\pi m_\pi=\textrm{Shapley}$. 

Exponentially many (with respect to~$n$) of these probability distributions $A^\pi$ can be constructed as follows. Choose $\Omega_{0}\subseteq\Omega,\;|\Omega_{0}|>3$
and define $S_{0}$ to be a group of all permutations $\pi$ acting identically on $\Omega\setminus\Omega_{0}$.
Choose $\alpha\in (0,2)$ and define a probability distribution on $S_n$ by
\[
A^{\pi}(\Omega_0)=
\begin{cases}
\frac{1}{n!}\textrm{ if }\pi\notin S_{0}\\
\frac{\alpha}{n!}\textrm{ if }\pi\in S_{0}\textrm{ and }\pi\textrm{ is even }\\
\frac{2-\alpha}{n!}\textrm{ if }\pi\in S_{0}\textrm{ and }\pi\textrm{ is odd }
\end{cases}
\]
One can verify that the corresponding expected value of marginal operators $m_\pi$ is the Shapley value. For a set
$\{\Omega_{1},\Omega_{2}, \ldots, \Omega_{k}\}$ s.t. $\Omega_{i}\nsubseteq\Omega_{j}$ for all $i$ and $j$, 
the vectors $(A^\pi(\Omega_i)-\frac{1}{n!})_i\in\R^{n!}$ are linearly independent and the distributions 
$(A^\pi(\Omega_i))_i$ are affine independent.
\section{Appendix}
\label{appendix}
Here we finish the proof of Theorem $\ref{thm:uniqueness}$ by the classification of supertransitive groups. 
Our proof is based on a classification of set-transitive permutation groups given by Beamont and Peterson in 1955~\cite{Beaumont1955}. 
Another proof of the supertransitive groups classification was given by Michal Jordan on mathoverflow~\cite{Jordan2011}.
\begin{theorem}
$G$ is a supertransitive subgroup of $S_n$ if and only if one of the following conditions holds:
\begin{itemize}
\item $G$ is the full symmetric group $S_n$ for some $n$,
\item $G$ is the alternating group $A_n$ for $n>3$,
\item $G$ is conjugate to the image of an exotic embedding of $S_5$ to $S_6$.
\end{itemize}
\end{theorem}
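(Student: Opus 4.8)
The plan is to reduce the classification to the Beaumont--Peterson theorem on \emph{set-transitive} groups (those acting transitively on the $k$-element subsets of $\{1,\dots,n\}$ for every $k$), cited here as \cite{Beaumont1955}. The first and main step is to prove the implication \emph{supertransitive $\Rightarrow$ set-transitive}; this places $G$ on the finite Beaumont--Peterson list, after which I only have to decide, entry by entry, which groups are genuinely supertransitive. Throughout I use the elementary remark that for a finite $\Omega$ and a bijection $g$, the inclusion $g(A)\subseteq A$ forces $g(A)=A$, and $g(A)=A$ is equivalent to $g(\Omega\setminus A)=\Omega\setminus A$; hence $G_A=G_{\Omega\setminus A}$, and supertransitivity makes $G_A$ act transitively on both $A$ and $\Omega\setminus A$.

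For the reduction I would show that supertransitivity realizes every single-element exchange on $k$-subsets. Fix a $k$-subset $A$, an element $i\in A$ and an element $j\in\Omega\setminus A$, and set $T=A\cup\{j\}$, a set of size $k+1$ containing both $i$ and $j$. By supertransitivity $G_T$ acts transitively on $T$, so there is $g\in G_T$ with $g(j)=i$; since $g$ permutes $T$ and $A=T\setminus\{j\}$, this forces $g(A)=T\setminus\{i\}=(A\setminus\{i\})\cup\{j\}$. The crucial subtlety — and the point I expect to be the main obstacle — is controlling \emph{which} element leaves $A$ in the exchange: naively sending $i\mapsto j$ leaves the discarded element undetermined, whereas sending $j\mapsto i$ inside $G_T$ pins the exchange down exactly. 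Because the Johnson graph $J(n,k)$ (vertices the $k$-subsets, edges the single exchanges) is connected for $0<k<n$, these exchanges connect any two $k$-subsets through elements of $G$, so $G$ is transitive on $k$-subsets for every $k$, i.e.\ set-transitive.

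Now Beaumont--Peterson leaves only finitely many candidates: $S_n$, $A_n$, and the four exceptional groups $\mathrm{AGL}(1,5)$ on $5$ points, $PGL(2,5)\cong S_5$ on $6$ points, and $PGL(2,8)$, $P\Gamma L(2,8)$ on $9$ points. To discard candidates I would apply a necessary divisibility test: if $G$ is set-transitive then orbit--stabilizer gives $|G_A|=|G|/\binom{n}{k}$ for every $k$-subset $A$, while supertransitivity forces $k=|A|$ to divide $|G_A|$. This eliminates $A_3$ (take $k=2$, so $|G_A|=1$), $\mathrm{AGL}(1,5)$ (take $k=3$, where $\binom{5}{3}=10$ gives $|G_A|=2$), and both $9$-point groups (take $k=5$, where $\binom{9}{5}=126$ gives $|G_A|=4$ for $PGL(2,8)$ and $|G_A|=12$ for $P\Gamma L(2,8)$, neither divisible by $5$).

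It then remains to verify that the survivors are supertransitive. For $S_n$ this is immediate, since $G_A=S_A\times S_{\Omega\setminus A}$ acts as the full symmetric group on $A$. For $A_n$ with $n>3$ I would exhibit, for $i,j\in A$, an even permutation fixing $A$ setwise with $g(i)=j$: the transposition $(i\,j)$ corrected by a transposition inside $\Omega\setminus A$ when $|\Omega\setminus A|\ge 2$, and a short case analysis via $A_{n-1}$ when $|\Omega\setminus A|\le 1$ (this is exactly where $n=3$ breaks down). Finally, the exotic $PGL(2,5)\hookrightarrow S_6$ is sharply $3$-transitive on the $6$ points of the projective line over $\mathbb{F}_5$; by the complement symmetry $G_A=G_{\Omega\setminus A}$ it suffices to treat $|A|\le 3$, and $3$-transitivity supplies an element realizing the transposition $(i\,j)$ on a $2$-set (fixing a third point) and the cycle $(i\,j\,k)$ on a $3$-set, giving the required transitivity of $G_A$ on $A$.
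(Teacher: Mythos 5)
Your route coincides with the paper's: prove that supertransitivity implies set-transitivity, invoke Beaumont--Peterson, discard the exceptional entries by a divisibility test, and verify $S_n$, $A_n$ for $n>3$, and the exotic $PGL(2,5)\leq S_6$ directly. Your reduction step is in fact handled more carefully than in the paper: choosing $g\in G_{A\cup\{j\}}$ with $g(j)=i$ really does pin down $g(A)=(A\setminus\{i\})\cup\{j\}$, whereas the paper picks $\pi$ with $\pi(i)=j$ and leaves the identity of the discarded element unexamined, exactly the subtlety you flag. Your divisibility test $k\mid |G|/\binom{n}{k}$ is a sharper variant of the paper's ``$\mathrm{lcm}(1,\ldots,n)$ divides $|G|$''; both suffice for the eliminations needed. (The paper's citation of Beaumont--Peterson also lists a degree-$5$ group of order $10$, which your test with $k=3$ would dispose of equally easily.)

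There is, however, one genuine gap, in the verification of the exotic group. The claim that the complement symmetry $G_A=G_{\Omega\setminus A}$ reduces the check to $|A|\le 3$ is false: supertransitivity requires $G_A$ to act transitively \emph{on $A$}, and knowing that $G_A=G_{\Omega\setminus A}$ acts transitively on the one- or two-point complement says nothing about its action on the four or five points of $A$. (Compare $A_3\leq S_3$: the stabilizer of $\{1\}$ is trivially transitive on $\{1\}$, yet that same subgroup, the stabilizer of $\{2,3\}$, is not transitive on $\{2,3\}$.) So the cases $|A|=4$ and $|A|=5$ need their own arguments. They are true but not free: for $|A|=5$ one uses that the point stabilizer $\mathrm{AGL}(1,5)$ is transitive on the remaining five points of the projective line, and for $|A|=4$ that the setwise stabilizer of two points (dihedral of order $8$, containing the maps $x\mapsto ax$) is transitive on the other four. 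The paper closes these cases by a different device that you could borrow: the images in $S_6$ of a $4$-cycle and a $5$-cycle of $S_5$ are again a $4$-cycle and a $5$-cycle, each of which acts transitively on some $4$-set, respectively $5$-set, that it stabilizes; since the property ``some element of $G_A$ acts transitively on $A$'' is constant on $G$-orbits of subsets, set-transitivity then propagates it to all $4$- and $5$-sets. Everything else in your proposal is sound.
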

\begin{proof}
Let $G\subseteq S_n$ be a group of permutations acting supertransitively on $\{1,\ldots,n\}$. This means that the stabilizer of each $A\subseteq\{1,\ldots, n\}$ acts
transitively on $A$. Let $B\subseteq\{1,\ldots,n\}$ and $i,j\notin B$. Then $G$ acts transitively on $B\cup\{i,j\}$ and there exists a permutation 
$\pi\in G$ taking $B\cup\{i\}$ to $B\cup\{j\}$ such that $\pi(i)=j$. This implies that for each $A$ and $B$ s.t. $|A|=|B|>1$, there exists a permutation $\pi\in G$
s.t. $\pi(A)=B$. If $|A|=|B|=1$, the same is true because supertransitivity implies transitivity. We have shown that if the action of $G$ is supertransitive, it is also set-transitive.

If $G$ has a supertransitive action on $\{1,\ldots,n\}$, then its order has to be divisible by each $k\leq n$, because each $k$-element set $A$ is isomorphic to $G/G_A$, hence
$|G|=|A|\times |G_A|$. So, $G$ has to be divisible by the least common multiple of $\{1,\ldots, n\}$.

Beamont and Petrson classified all set-transitive permutation groups in \cite{Beaumont1955}. It follows that such subgroups of $S_n$ are exactly the full symmetric group $S_n$ for any $n$, the alternating group $A_n$ for $n>2$ and $5$ exceptions. The first and second exceptions are subgroups of $S_5$ of order $10$, resp. $20$. These groups cannot have a supertransitive action on $\{1,\ldots,5\}$, because the lowest common multiple of $\{1,\ldots,5\}$ is $60$. 
Two other exceptions in Beamont's classification are subgroups of $S_9$ of orders 504 and 1512. These numbers are not divisible by the lowest common multiple of $\{1,\ldots, 9\}$
so we can exclude them as well. The last exception is a subgroup of $S_6$ of order 120. This groups is equivalent to the exotic embedding of $S_5$ to $S_6$ and we will show that
it acts supertransitively on $S_6$.

In \cite{Janusz1982}, the authors realize this group action on $\{1,\ldots, 6\} $ as the conjugate action of $S_5$ on its six Sylow $5$-subgroups. Using this realisation, we may show that this action is supertransitive by direct calculation. Let as denote the Sylow $5$-subgroups by $I=\langle (12345)\rangle$, $II=\langle (12354)\rangle$, $III=\langle(12435)\rangle$, $IV=\langle (12453)\rangle$, $V=\langle (12534))\rangle$ 
and $VI=\langle (12543)\rangle$. 
An elementary calculation shows that the image of a transposition in $S_5$ is the product of three disjoint transpositions in $S_6$, 
e.g. $(1,2)\in S_5\,\mapsto\, (I,VI)\,(II, IV) (III,V)$ in the above realisation.  
Together with the set-transitivity of this $S_5$-action, this implies $2$-supertransitivity.
The image of a $3$-cycle in $S_5$ is a product of two disjoint $3$-cycles in $S_6$, which implies $3$-supertransitivity.
Similarly, the image of a $4$-, resp. $5$-cycle in $S_5$ is a $4$-, resp. $5$-cycle in $S_6$, which implies $4$- and $5$-supertransitivity.

It remains to prove that $A_n$ is supertransitive if and only if $n>3$.
First note that $A_2=\{id\}$, reps. $A_3=\langle (123)\rangle$ are not supertransitive, 
because no element of these groups takes 1 to 2 and preserves $\{1,2\}$. 
Let $n>3$ and $A\subseteq \{1,\ldots, n\}$ be a $k$-set. 
If $k<n-1$, then any permutation of $A$ can be extended to an even permutation of $\{1,\ldots,n\}$. 
If $k=n-1>2$, then for each $i,j\in A$, there exists an even permutation of $A$ taking $i$ to $j$.
This can be extended to an even permutation of $\{1,\ldots, n\}$, acting identically on the complement of $A$.
$\square$\end{proof}

\bigskip
\section{Acknowledgements}

We would like to thank to Michal Jordan for his mathematical remarks and discussion on mathoverflow.
This work was supported by M{\v S}MT project number OC10048  and by the institutional research plan AV0Z100300504 and by the Excelence project P402/12/G097 DYME – Dynamic Models in Economics of GA\v CR.

\bibliographystyle{abbrv}
\bibliography{shapley_symmetries}

\end{document}